\theoremstyle{plain}
\newtheorem{proposition}{Proposition}[section]
\newtheorem{theorem}[proposition]{Theorem}
\theoremstyle{definition}
\newtheorem{definition}[proposition]{Definition}
\newtheorem{remark}[proposition]{Remark}
\pgfplotsset{compat=newest,ticks=none}
\DeclareMathOperator{\sgn}{sgn}
\newcommand{\R}{\mathbf{R}}
\newcommand{\abs}[1]{\left\lvert #1 \right\rvert}
\newcommand{\avg}[1]{\bigl\langle #1 \bigr\rangle}
\newcommand{\jump}[1]{\bigl[ #1 \bigr]}
\begin{document}

\title{Lax Integrability of the  Modified Camassa-Holm Equation and the Concept of Peakons \let\thefootnote\relax\footnotetext{Published in J. Nonlinear Math. Phys. 23(4):563--572, 2016; a minor correction is made in this version.}}

\author{Xiang-Ke Chang\thanks{LSEC, Institute of Computational Mathematics and Scientific Engineering Computing, AMSS, Chinese Academy of Sciences, P.O.Box 2719, Beijing 100190, PR China; Department of Mathematics and Statistics, University of Saskatchewan, 106 Wiggins Road, Saskatoon, Saskatchewan, S7N 5E6, Canada; changxk@lsec.cc.ac.cn}
\and
  Jacek Szmigielski\thanks{Department of Mathematics and Statistics, University of Saskatchewan, 106 Wiggins Road, Saskatoon, Saskatchewan, S7N 5E6, Canada; szmigiel@math.usask.ca}
}
\date{}

\maketitle

\begin{abstract}
  In this Letter we propose that 
  for Lax integrable nonlinear partial differential equations the natural concept of 
  weak solutions is implied by the compatibility condition for the respective 
  distributional Lax pairs.  We illustrate our proposal 
  by comparing two concepts of weak solutions 
  of the modified Camassa-Holm equation pointing out that in the 
  \emph{peakon} sector (a family of non-smooth solitons)
only one of them, namely the one obtained from 
  the distributional compatibility condition, supports the 
  time invariance of the Sobolev $H^1$ norm.  
  \end{abstract}
\textbf{Keywords:} {Weak solutions; peakons; distributions.}
\\
\textbf{2000 Mathematics Subject Classification:}
35D30, 
35Q51, 
37J35, 
35Q53.






\section{Introduction} 
\vspace{0.5 cm} 

The partial differential equation with cubic nonlinearity 
\begin{equation}\label{eq:m1CH}
m_t+\left((u^2-u_x^2) m)\right)_x=0, \qquad 
m=u-u_{xx}, 
\end{equation}
is a modification of the Camassa-Holm equation (CH) 
~$m_t+u m_x +2u_x m=0, \, ~m=u-u_{xx}$ \cite{CH}, for the shallow water waves.  
Originally, \eqref{eq:m1CH} appeared in the papers of Fokas \cite{fokas1995korteweg}, Fuchssteiner \cite{fuchssteiner1996some},  Olver and Rosenau\cite{olver1996tri} and was, later, rediscovered by Qiao \cite{qiao2006new,qiao2007new}.  
According to  the recent work \cite{kang-liu-olver-qu} this equation has a number of potential applications and features that make this equation worth studying: 
\begin{enumerate}
\item 
it models the uni-directional propagation of shallow water waves over a flat bottom 
(in \cite{kang-liu-olver-qu} the authors credit Fokas \cite{fokas-physically-important, fokas1995korteweg} for the derivation); 
\item it arrises from an intrinsic (arc-length preserving) invariant planar flow in 
Euclidean \\geometry ~\cite{gui2013wave}; 
\item it posses interesting non-smooth solutions such as non-smooth solitons (peakons).  
\end{enumerate}

As to the name of \eqref{eq:m1CH} we note that the derivation of this equation in \cite{olver1996tri} followed from an elegant but mysterious method of tri-Hamiltonian duality applied to the bi-Hamiltonian representation of the modified Korteweg-de Vries equation.  Since the CH equation can be obtained from the Korteweg-de Vries equation
by the same tri-Hamiltonian duality, it is therefore natural to refer to \eqref{eq:m1CH} as the modified CH equation (mCH) (\cite{gui2013wave,liu-liu-qu-multip}).  To avoid confusions we point out that 
some authors use the name FORQ to denote 
\eqref{eq:m1CH} (e.g. \cite{Himonas4peak}, \cite{himonas2014cauchy}).  
However, some ambiguity remains as there are other--distinct from 
\eqref{eq:m1CH}--equations also 
called {\bf the} modified CH equation, obtained from the CH equation on the basis of 
other considerations (see e.g. \cite{reyes-gorka}).

In the last 23 years since the appearance of \cite{CH} we have witnessed a sharp increase in reported "peakon equations":
the first significant addition from this class was the Hunter-Saxton (HS) equation 
\cite{hunter1991dynamics, hunter-zhang, bss-hunter}, followed by 
the Degasperis-Procesi (DP)
equation
\cite{degasperis1999asymptotic, degasperis2002new}, subsequently followed by  V. Novikov 
equation \cite{novikov2009generalisations, hone2008integrable} and then the Geng-Xue 
equation \cite{geng2009extension}.  Around the same time, 
mainly through the work of Z.Qiao and his collaborators \cite{qiao2006new, qiao2007new, qiao2008m, 
qiao2012integrable,song2011new}, with important contributions by others \cite{ivanov2006extended, fontanelli2006three, chen2006two, geng2011three, holm2011two, li2014four}, it 
became clear that the \textit{land of peakons} is vast.  
This leads to a natural question of understanding and possibly classifying peakon equations using some, as yet unknown, 
fundamental principles.

It should be pointed out that the quest to understand the 
general principles behind the peakon equations is not motivated solely by purely mathematical 
interest.  These equations show some remarkable features among which 
the stability of its peakon solutions is quite pertinent to this paper; we recall 
that 
the (orbital) stability of the CH peakons was established in \cite{constantin-strauss}, then the results for the periodic CH
peakons \cite{lenells-stabilitypp} and the DP peakons \cite{lin-liu-stabDPpeakons} followed suit. Another reason for studying the peakon equations 
is their potential hydrodynamical significance \cite{Constantin-Lannes}, 
for example in capturing essential features of 
waves of greatest height \cite{constantin-stokes,constantin-stokes-extreme}.

Before a classification of the peakon equations becomes viable it is 
important to make the following provisional statement: there are two types of 
known peakon equations, one type is in one way or the other described by Lax 
pairs, and equations in this class are expected to be integrable in some sense. 
The other type consists of non-integrable equations and the 
prototypical example was first introduced in the work of Degasperis, Holm and Hone \cite{dhh2} who defined a family of equations, later dubbed the $b$-equations,   
\begin{equation}\label{eq:b-equation}
u_t-u_{xxt}+(b+1)uu_x=bu_xu_{xx}+uu_{xxx}, \qquad b\in \R, 
\end{equation}
which reduce to CH and DP equations for special values $b=2$, $b=3$ respectively, 
while for other values of $b$ are non-integrable, failing the integrability test
discussed in \cite{degasperis1999asymptotic}; non-integrability later confirmed by more general methods by Mikhailov and 
Novikov in \cite{mikhailov-novikov-perturbative}.  The main point of 
\cite{dhh2} was that regardless of the value of $b$, \eqref{eq:b-equation} has peakon solutions obtained from 
the peakon ansatz $u=\sum_{j=1}^n m_j (t)e^{-\abs{x-x_j(t)}}$, 
by constraining smooth coefficients $m_j(t), x_j(t)$ to satisfy a system of 
ODEs.   

At this point our interest is in understanding the mathematical 
underpinnings of Lax integrable peakon equations, leaving non-integrable 
peakon equations for future work.  In our opinion two 
features of \eqref{eq:m1CH}  deserve to be further studied 
to inform future discussions of Lax integrable peakon equations: 
\begin{enumerate} 
\item as we argue below, using \eqref{eq:m1CH} as an example,  
the definition of what one means by weak solutions of a Lax integrable peakon equation is determined by the distributional character of its Lax pair; 
\item the Lax integrable peakon ODEs (see \eqref{eq:xmODE}) are isospectral 
deformations of an  \textit{oscillatory}
system in the sense of Gantmacher and Krein \cite{gantmacher-krein}.  
\end{enumerate} 
It is fair to say that these two items have very different ontological status.  
The question as to what one means by weak solutions in the case of an obvious 
lack of sufficient classical smoothness is a necessary starting point for any 
serious discussion of a PDE to which one is seeking generalized (non classical) 
solutions.  So without proper understanding of this issue it will be hard to have any 
meaningful discussion of what peakon equations are.  By contrast, the fact that the peakon sector of the DP equation comes from an oscillatory system 
was first proven in \cite{ls-cubicstring}, later confirmed for V. Novikow peakons 
in \cite{hls}, for the Geng-Xue equation in \cite{lundmark2014inverse}, while 
for the CH equation it is implicit in \cite{bss-stieltjes, bss-moment} once 
one realizes that the CH equation can be viewed as an isospectral deformation of the inhomogeneous 
string as defined by M. G. Krein in his study of oscillatory systems \cite{gantmacher-krein}.  So the second item is just a confirmation that the mysterious 
link between peakons and oscillatory systems persists for \eqref{eq:m1CH} and 
we have no additional insight into this issue at the moment, so we restrict ourselves just to mentioning this fact.   The remainder of the present letter is confined 
entirely to the  first item on the list above. 
\section{Weak solutions of the modified CH equation; a comparison}
 We note that equation 
\eqref{eq:m1CH} is the first peakon equation known to us for which the need to view 
the weak sector as a consequence of the distributional compatibility of its Lax pair  - rather than by defining weak solutions \textit{ a priori}, basing the definition  on the type of nonlinearity - is placed front and center.  

We recall that \eqref{eq:m1CH} can be written in bi-Hamiltonian 
form \cite{olver1996tri} 
\begin{equation}
    m_t=\mathcal{K} \frac{\delta \mathcal{H}_1}{\delta m}=\mathcal{J} \frac{\delta \mathcal{H}_2}{\delta m}, \qquad m=u-u_{xx}, 
\end{equation}
where the compatible Hamiltonian operators are: 
\begin{equation}
\mathcal{K}=-\partial _x m \partial_x^{-1} m \partial_x, \qquad \mathcal{J}=(\partial_x^3-\partial_x), 
\end{equation} 
with Hamiltonian functionals 
\begin{equation}
    \mathcal{H}_1=\int m u\,  dx, \qquad \mathcal{H}_2=\frac14 \int \big(u^4+2u^2u_x^2-\frac13 u_x^4\big) dx. 
\end{equation} 
Even though these are formal expressions it is not difficult to make rigorous analytic 
sense of them.  For example, if $u\in \mathcal{S}(\R)$ (Schwartz class of smooth, 
rapidly vanishing functions) then all expressions are well defined, in particular 
$\partial _x^{-1}$ can be taken to act as $\int_{-\infty}^x$ on any 
function $f\in \mathcal{S}(\R)$, giving after one integration by parts
\begin{equation} \label{eq:H1norm}
    \mathcal{H}_1=\int_{\R} (u^2+u_x^2)\, dx= ||u||_{H^1}^2.  
\end{equation} 
In other words the Hamiltonian $\mathcal{H}_1$ is the square of the 
Sobolev norm.  Since Hamiltonians are constants of motion we see that 
at least in the smooth sector of solutions the Sobolev norm, $||u||_{H^1}$, is a constant of 
motion.  The objective now is to extend \eqref{eq:m1CH} to functions 
with less smoothness, preserving as much of the original structure as possible.  
We recall from \cite{gui2013wave} one possible definition of weak solutions.   
\begin{definition} [Weak solution] \label{def:weakgui}
Given initial data $u_0$ in the Sobolev space $W^{1,3}(\R)$, the function 
$u\in L_{loc}^{\infty}([0,T], W_{loc}^{1,3}(\R))$ is said to be a weak solution to 
\eqref{eq:m1CH} with initial condition $u_0$ if the following identity holds
\begin{equation}
    \int_0^T\int_{\R} \big[ u\phi_t+\frac13 u^3 \phi_x+\frac13 u_x^3 \phi+ 
    p\star (\frac23 u^3+uu_x^2) \phi_x -\frac13(p\star u_x^3) \phi \big]\, dx+ 
    \int_{\R} u_0(x) \phi(0, x)\, dx=0
\end{equation} 
for all test functions $\phi(t,x)\in C_c^{\infty}\big([0,T)\times \R \big)$, 
where $p(x)=\frac12 e^{-\abs{x}}$ and the star $\star$ is the standard 
convolution product on $\R$ (for other details see \cite{gui2013wave}).  
\end{definition} 
We will test this definition on an example of a non-smooth  solution obtained from the peakon ansatz \cite{CH,qiao2012integrable,gui2013wave}, that is, we assume 
\begin{equation} \label{eq:peakonansatz} 
    u=\sum_{j=1}^n m_j (t)e^{-\abs{x-x_j(t)}}, 
\end{equation} 
where all $m_j$s are positive, and hence $ m=2\sum_{j=1}^n m_j \delta_{x_j}$ is a positive discrete measure.  We then substitute this ansatz into Definition  \ref{def:weakgui} to determine the $t$-dependence of $m_j(t), x_j(t)$.  
The relevant computation is done in \cite{gui2013wave} and we record 
the result following the elegant presentation in \cite{qiao2012integrable}: 
\begin{equation} \label{eq:peakonodesqiao}
    \dot m_i=0, \qquad \dot x_i=-\frac13m_i^2+\sum_{j,k}^n m_j m_k \big(1-\
    \sgn(x_i-x_j) \sgn(x_i-x_k)\big)e^{-\abs{x_j-x_i}-\abs{x_i-x_k}}, 
\end{equation} 
with the convention $\sgn(0)=0$.  We can specialize this expression to the set satisfying the 
ordering conditions $x_1<x_2<\dots<x_n$, in which case \eqref{eq:peakonodesqiao}
simplifies to: 
\begin{equation}\label{eq:peakonodesgui}
\dot m_j=0, \qquad \dot x_j=\frac23m_j^2+2m_j\sum_{i\neq j} m_i 
e^{-\abs{x_i-x_j}}+ 4
\sum_{i<j< k}^n m_i m_k e^{-\abs{x_i-x_k}}.  
\end{equation}
\begin{remark} 
We believe there is a slight misprint in \cite{gui2013wave} (see p. 22 therein), namely, 
the second term should have only non-diagonal summation as displayed above.  
We also note that stability of peakons satisfying \eqref{eq:peakonodesgui} 
was established in \cite{liu-liu-qu} and \cite{liu-liu-qu-multip}. 
\end{remark}

It is sufficient for our purposes to assume the number of peakons $n$ to be two.  
Thus 
\begin{equation*}
u=m_1(t) e^{-|x-x_1(t)|}+m_2(t) e^{-|x-x_2(t)|}, 
\end{equation*}
and the above definition of a weak solution in this special case 
results in the following system of ODEs: 
\begin{subequations}\label{eq:twopeakonsgui}
\begin{align} 
\dot m_1&=0=\dot m_2, \\
\dot x_1&=\frac23 m_1^2+2 m_1 m_2 e^{-\abs{x_1-x_2}}, \\
\dot x_2&=\frac23 m_2^2+2 m_1 m_2 e^{-\abs{x_1-x_2}}.  
\end{align}
\end{subequations}
The Sobolev norm of $u$ for the two-peakon ansatz can be easily computed (see 
the first part of the proof  of Theorem \ref{thm:H1conservation} for the general case) to be 
\begin{equation*}
||u||_{H^1}^2=2m_1u(x_1)+2m_2u(x_2)=2(m_1^2+m_2^2)+4m_1m_2 e^{-\abs{x_1-x_2}}, 
\end{equation*}
and this expression is clearly not $t$ invariant for general $m_1,m_2$ as one can easily see by 
computing $\frac{d}{dt} ||u||_{H^1}^2$ in the region $x_1<x_2$ with the help of 
\eqref{eq:twopeakonsgui}, in which case 
one obtains 
\begin{equation*}
\frac{d}{dt} ||u||_{H^1}^2=4m_1m_2(\dot x_1-\dot x_2)e^{x_1-x_2}=\frac{8m_1m_2}{3}(m_1^2-m_2^2)e^{x_1-x_2}, 
\end{equation*}
implying that the Sobolev $H^1$ norm of $u$ is only conserved in the case of 
$m_1=m_2$.  
\begin{remark} 
The concept of a weak solution proposed in \cite{gui2013wave} is 
a natural generalization of earlier definitions of weak solutions 
developed for nonlinear PDEs of Burgers' type, 
in other words inspired by one dimensional conservation laws (see \cite{constantin-escher-globalweakCH, constantin-escher } for a relevant discussion of  this type of weak solutions to CH; for other types of weak solutions to 
CH see \cite{constantin-bressan}) and it works well for certain fundamental questions, like general
existence theorems.  We are only pointing out that, perhaps essential, 
aspects of integrability known in the smooth sector may not survive the transition 
to the weak sector so defined. For other relevant work related 
to \eqref{eq:m1CH} and its weak solutions the reader is 
asked to consult  \cite{reyes-gorka-bies}.  
\end{remark}  

\subsection{ Lax integrability and peakons} 
What we are proposing in this paper is 
to preserve Lax integrability instead and let Lax integrability dictate the suitable 
paradigm for weak/distributional solutions of \eqref{eq:m1CH}.   First we state 
the final result in the case of the peakon ansatz, leaving the details to the later part of the paper.  In order to view 
\eqref{eq:m1CH} as a distributional equation involving a discrete measure $m$ one needs to define the product $u_x^2 m$.  We will argue below that the only 
choice consistent with Lax integrability is to take $u_x^2 m$ to mean 
\begin{equation} \label{eq:defprod} 
u_x^2 m\stackrel{def}{=}\langle u_x^2 \rangle m, 
\end{equation}
where $\langle f \rangle$ denotes the average function (the arithmetic average 
of the right hand and left hand limits). Since for the peakon ansatz
\begin{equation}
\dot m=2\sum_{j=1}^n \dot m_j \delta_{x_j}-2\sum_{j=1}^n m_j \dot x_j \delta'_{x_j},  
\end{equation}
hence \eqref{eq:m1CH}, with the rule \eqref{eq:defprod} in force,  
readily reduces to the system of ODEs: 
\begin{equation}\label{eq:xmODE}
\dot m_j=0, \qquad \dot x_j=u^2(x_j)-\langle u_x^2 \rangle(x_j).  
\end{equation}
Furthermore, assuming the ordering condition $x_1< x_2<\cdots<x_n$, we obtain 
another, more explicit form of \eqref{eq:xmODE}, namely
\begin{equation}\label{mCH_ode}
\dot{m}_j=0, \qquad 
\dot{x}_j=2\sum_{\substack{1\leq k\leq n,\\k\neq j}}m_jm_ke^{-|x_j-x_k|}+4\sum_{1\leq i<j<k\leq n}m_im_ke^{-|x_i-x_k|}\, .  
\end{equation}
We note that this system differs from \eqref{eq:peakonodesgui}; 
the difference amounting to the absence of the term 
$\frac23 m_j^2$ whose presence on the other hand can be traced back precisely to the definition of the singular product $u_x^2 m$.  Indeed, if we used Definition \ref{def:weakgui} then 
\begin{equation*}
u_x^2 m\stackrel{def\,\,  \text{\ref{def:weakgui}}}{=} \big(\frac{\avg{u_x^2}+2\avg{u_x}^2}{3}\big) m. 
 \end{equation*}  
 \subsection{Lax integrability and the preservation of the $H^1$ norm} 
 
 The following theorem supports the idea of building the concept of 
 weak solutions to \eqref{eq:m1CH} based on the multiplication 
 formula \eqref{eq:defprod}.  We have 
 
\begin{theorem} \label{thm:H1conservation} 
Let $u$ be given by the peakon ansatz \eqref{eq:peakonansatz} and let 
the multiplication of the singular term $u_x^2m$ in 
\eqref{eq:m1CH} be defined by \eqref{eq:defprod}.  Then if $u$ satisfies 
\eqref{eq:m1CH} and 

\begin{equation*} 
    \frac{d}{dt} ||u||_{H^1}=0.  
\end{equation*}
\end{theorem}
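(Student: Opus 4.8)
The plan is to reduce $\frac{d}{dt}||u||_{H^1}^2$ to a telescoping sum indexed by the peak positions. First I would establish the ``first part'' formula referred to in the excerpt, $||u||_{H^1}^2=2\sum_{j}m_j u(x_j)=2\sum_{j,k}m_j m_k e^{-|x_j-x_k|}$. Since $u=p\star m$ is continuous but only piecewise smooth, with $m=2\sum_j m_j\delta_{x_j}$, I would compute $\int u_x^2$ by integrating by parts interval by interval on the partition cut out by $x_1<\cdots<x_n$; on each open interval $m=0$, so $u-u_{xx}=0$ and the $-\int u u_{xx}$ pieces reassemble into $-\int u^2$, while the boundary terms collapse to $\sum_j u(x_j)\bigl(u_x^-(x_j)-u_x^+(x_j)\bigr)=2\sum_j m_j u(x_j)$, using that the jump of $u_x$ across $x_j$ equals $-2m_j$. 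Adding $\int u^2$ gives the stated formula.

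Next I would differentiate in $t$. Because $\dot m_j=0$, only the exponentials evolve, and $\frac{d}{dt}e^{-|x_j-x_k|}=-\sgn(x_j-x_k)(\dot x_j-\dot x_k)e^{-|x_j-x_k|}$, the diagonal terms dropping out since $\sgn(0)=0$. Symmetrizing in the pair $(j,k)$ shows the two halves of the resulting double sum coincide, leaving $\frac{d}{dt}||u||_{H^1}^2=4\sum_j m_j\,\avg{u_x}(x_j)\,\dot x_j$, where $\avg{u_x}(x_j)=-\sum_k m_k\sgn(x_j-x_k)e^{-|x_j-x_k|}$ is exactly the average slope at the $j$-th peak.

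The heart of the argument is the observation that between consecutive peaks $u$ solves $u-u_{xx}=0$, so $u=\alpha_j e^{x}+\beta_j e^{-x}$ on $(x_j,x_{j+1})$ and therefore $u^2-u_x^2\equiv 4\alpha_j\beta_j=:\gamma_j$ is \emph{constant} on each interval, with $\gamma_0=\gamma_n=0$ on the two unbounded intervals where $u$ decays. Writing the one-sided limits as $u_x^\pm(x_j)=\avg{u_x}(x_j)\mp m_j$, I would extract two identities at each peak. First, the multiplication rule \eqref{eq:defprod} forces $\dot x_j=u^2(x_j)-\avg{u_x^2}(x_j)=\tfrac12\bigl(\gamma_{j-1}+\gamma_j\bigr)$, the average of the two adjacent interval values of $u^2-u_x^2$. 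Second, subtracting those same one-sided values gives $(u_x^-)^2-(u_x^+)^2=4m_j\avg{u_x}(x_j)=\gamma_j-\gamma_{j-1}$, i.e. $m_j\avg{u_x}(x_j)=\tfrac14(\gamma_j-\gamma_{j-1})$.

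Substituting both identities into the expression from the second step collapses everything: $\frac{d}{dt}||u||_{H^1}^2=4\sum_j \tfrac14(\gamma_j-\gamma_{j-1})\cdot\tfrac12(\gamma_{j-1}+\gamma_j)=\tfrac12\sum_{j}(\gamma_j^2-\gamma_{j-1}^2)=\tfrac12(\gamma_n^2-\gamma_0^2)=0$, and hence $\frac{d}{dt}||u||_{H^1}=0$. The main obstacle I anticipate is not any single computation but choosing the right intermediate variables: substituting the explicit velocities \eqref{mCH_ode} directly yields a double sum over pairs of peaks whose vanishing is opaque (though it can be verified by hand for $n=2,3$), whereas rewriting both $\dot x_j$ and $m_j\avg{u_x}(x_j)$ through the piecewise constants $\gamma_j$ is precisely what exposes the telescoping and makes transparent that the conservation is enforced by the multiplication rule $\dot x_j=\avg{u^2-u_x^2}(x_j)$ rather than by the competing convention of Definition~\ref{def:weakgui}.
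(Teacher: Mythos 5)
Your proposal is correct and follows essentially the same route as the paper: the same integration-by-parts identity $||u||_{H^1}^2=2\sum_j m_j u(x_j)$, the same reduction of the time derivative to $4\sum_j m_j\avg{u_x}(x_j)\dot x_j$, and the same key observation that $u^2-u_x^2$ is piecewise constant and vanishes on the unbounded intervals, so the sum telescopes. Your explicit interval constants $\gamma_j$ are just a notational repackaging of the paper's jump/average identities $\jump{f}\avg{f}=\frac12\jump{f^2}$.
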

\begin{proof} 
First we establish that for the peakon ansatz the relation (see \eqref{eq:H1norm}) 
\begin{equation*}
||u||_{H^1}^2=\int um dx=\sum_{j=1}^n 2m_j u(x_j)
\end{equation*}
persists.  Indeed, setting $x_0=-\infty, x_{n+1}=\infty$ and partitioning $\R$ as $\cup_{j=0}^{n} (x_j, x_{j+1})$, we can write 
\begin{align*}
&||u||_{H^1}^2=\int \big(u^2+u_x^2\big)\, dx=\sum_{j=0}^n \int_{x_j+}^{x_{j+1}-} \big(u^2+u_x^2\big)\, dx=\\
&\int u^2\, dx + \sum_{j=0}^n uu_x\bigg|_{x_j+}^{x_{j+1}-}-
\sum_{j=0}^n \int_{x_j+}^{x_{j+1}-}uu_{xx} \, dx=
\sum_{j=0}^n uu_x\bigg|_{x_j+}^{x_{j+1}-}, 
\end{align*}
where in the last step we used that $u=u_{xx}$ holds away from the support 
of $m$.  If we denote by $[f](x_j)$ the jump of a piecewise continuous function 
$f$ at $x_j$ then we can write 
\begin{equation*}
||u||_{H^1}^2=-\sum_{j=1}^n \jump{u_x}(x_j) u(x_j)=\sum_{j=1}^n 2m_j u(x_j)=
\int u m \, dx, 
\end{equation*}
thus proving the claim.  
We proceed now to compute the time derivative of $||u||_{H^1}^2$.  We will 
carry out the computation in two steps.  
First we observe that as long as all $x_j$s are distinct, $u(x_j)$ is differentiable in $t$ and the derivative 
is given by the formula
\begin{equation*}
\frac{d}{dt}u(x_j)=\avg{u_x}(x_j)\dot x_j+\avg{u_t}(x_j), 
\end{equation*}
which in conjunction with \eqref{eq:peakonansatz}, \eqref{eq:defprod} and \eqref{mCH_ode} implies
\begin{align*}
&\frac{d}{dt}||u||_{H^1}^2=\frac{d}{dt} \big(\sum_{j=1}^n 2m_j u(x_j)\big)=
\sum_{j=1}^n 4m_j \avg{u_x}(x_j)\dot x_j\stackrel{\eqref{eq:xmODE}}{=}\sum_{j=1}^n 4m_j \avg{u_x}(x_j)\big(
u^2-\avg{u_x^2}\big)(x_j)=\\&-2\sum_{j=1}^n \jump{u_x}(x_j)\avg{u_x}(x_j)\big(
u^2-\avg{u_x^2}\big)(x_j)=-\sum_{j=1}^n \jump{u_x^2}(x_j)\big(
u^2-\avg{u_x^2}\big)(x_j)=\\&\sum_{j=1}^n \jump{u^2 -u_x^2}(x_j)\avg{
u^2-u_x^2}(x_j)=\frac12 \sum_{j=1}^n \jump{\big(u^2-u_x^2\big)^2}(x_j), 
\end{align*}
where we used multiple times the identity $\avg{f}(x_j)\jump{f}(x_j)=\frac12\jump{f^2}(x_j)$ valid for any piecewise continuous function $f$.  This establishes the 
identity 
\begin{equation*}
\frac{d}{dt}||u||_{H^1}^2=\frac12 \sum_{j=1}^n \jump{\big(u^2-u_x^2\big)^2}(x_j), 
\end{equation*}
which completes the first step of the computation.  
In the second step of the computation 
we note that for the peakon ansatz \eqref{eq:peakonansatz} $u^2-u_x^2$ is a 
piecewise constant function since on each interval $(x_j, x_{j+1})$ 
\begin{equation*}
u=A_j e^x+B_j e^{-x}, \quad u_x =A_je^x-B_je^{-x}, 
\end{equation*}
hence $u^2-u_x^2=4A_jB_j$ there.  Moreover, on the interval $(x_0, x_1)$, $B_0=0$, 
and on $(x_n, x_{n+1})$, $A_{n}=0$ respectively, implying in each case 
that $u^2-u_x^2=0$ on these two intervals.  In the final step of the proof 
we use that $u^2-u_x^2$ is a piecewise constant function to note that 
$\sum_{j=1}^n \jump{\big(u^2-u_x^2\big)^2}(x_j)$ is therefore a telescoping 
sum and hence, as the result of cancellations of interior terms and the absence of 
boundary terms, 
\begin{equation*}
\frac{d}{dt}||u||_{H^1}^2=\frac12 \big( (u^2-u_x^2)^2(x_n+)-(u^2-u_x^2)^2(x_1-))=0, 
\end{equation*}
which completes the proof.  
\end{proof} 
\section{Lax integrability and weak solutions} 
Here we provide some details as to why the originally ill-defined 
term $u_x^2 m$ in \eqref{eq:m1CH} should be regularized by 
using formula \eqref{eq:defprod} if  
Lax integrability is to be preserved.  We first recall that in the smooth sector the Lax 
pair for \eqref{eq:m1CH} reads \cite{qiao2006new}: 
\begin{equation}\label{eq:dLax-pair}
\Psi_x=\frac12U \Psi, \quad  \Psi _t =\frac12 V \Psi, \quad  \Psi=\begin{bmatrix} \Psi_1\\\Psi_2 \end{bmatrix}, 
\end{equation} 
with 
\begin{equation*} 
U=\begin{bmatrix} -1 &\lambda m\\ -\lambda m& 1 \end{bmatrix}, \qquad  
V=\begin{bmatrix} 4\lambda^{-2} + Q & -2\lambda^{-1} (u-u_x)-\lambda m Q\\
2\lambda^{-2}(u+u_x)+\lambda m Q & -Q \end{bmatrix}, \quad Q=u^2-u_x^2. 
\end{equation*} 
Furthermore, in the smooth sector, Lax integrability is understood to mean that the compatibility condition $\Psi_{xt}=\Psi_{tx}$ implies \eqref{eq:m1CH}.  
To promote \eqref{eq:dLax-pair} to a distributional Lax pair we need 
to ensure that all sides of \eqref{eq:dLax-pair} are 
well defined as distributions.  In our approach we 
consider $\Psi(\bullet,t)$ as a vector valued ($\R ^2$) distribution in $\mathcal{D}'(\R_x)$ and 
$D_x$ is the standard distributional derivative (in $x$).  We view 
the $t$ variable as a deformation parameter, in agreement with 
the original formulation by P. Lax \cite{lax-kdv},  which prompts us to view 
$\Psi(\bullet,t)$ as a differentiable map. 
\begin{definition}\label{def:PsiDist}
Let $T>0$ be given then $\Psi(\bullet, t)$ is a differentiable map
\begin{align*}
&\Psi(\bullet,t): (0, T)\ni t \longrightarrow \R^2\otimes\mathcal{D}'(\R_x), \qquad 
\text{where the  distributional derivative (with respect to $t$) is given by}\\& D_t\Psi(\bullet,t)\stackrel{def}{=} \lim_{h\rightarrow 0} \frac{ \Psi(\bullet, t+h)-\Psi(\bullet, t)}{h}, 
\end{align*}
provided the limit exists in $\R^2\otimes\mathcal{D}'(\R_x)$.  
\end{definition} 
To complete the definition of a distributional Lax pair we 
examine the conditions on the right hands sides of \eqref{eq:dLax-pair} needed to 
ensure that these are well defined distributions. In this letter we 
restrict our attention to the peakon ansatz \eqref{eq:peakonansatz}.  Away from 
the singular support of $m$, that is, away from the points $\{x_1,x_2, \cdots, x_n\}$, 
the distributional Lax pair will be smooth and given by \eqref{eq:dLax-pair}.  
On the singular support the multipliers of $m$ are not continuous functions, because 
neither is $\Psi$ as a piecewise smooth function, and nor is $Q=u^2-u_x^2$, being a piecewise constant function (see the proof of Theorem \ref{thm:H1conservation}), both with jumps on the singular support of $m$.  
Thus none of the multiplications $\Psi m, Q \Psi m$ is defined.  To define them 
we will have to assign values to $\Psi$ and $Q$ at the points $\{x_1,x_2, \cdots, x_n\}$. We postulate that the values of $\Psi$ at these points are a linear combination
of their respective left hand and right hand limits.  More precisely
\begin{definition} \label{def:invariantreg}
An invariant regularization of 
the Lax pair \eqref{eq:dLax-pair} valid for the peakon ansatz \eqref{eq:peakonansatz} is given by specifying 
the values of $\alpha, \beta \in \R$, subsequently setting 
\begin{equation*} 
\Psi(x_k, t)=\alpha \jump{\Psi}(x_k,t)+\beta\avg{\Psi}(x_k,t), 
\end{equation*}
and assigning some values $Q(x_k,t)$ to $Q(x,t)=u^2 -u_x^2$ 
at the points $\{x_1,x_2, \cdots, x_n\}$.  Then 
\begin{equation*}
\begin{aligned}
\Psi(\bullet, t)\delta _{x_k}&\stackrel{def}{=}\Psi(x_k,t) \delta_{x_k},   \qquad 1\leq k\leq n, \\
Q(\bullet,t)\delta_{x_k}&\stackrel{def}{=}Q(x_k,t) \delta_{x_k}, \qquad 1\leq k\leq n.  
\end{aligned}
\end{equation*}
\end{definition}
Now the distributional Lax pair 
\begin{equation}\label{eq:DLax-pair}
D_x\Psi=\frac12U \Psi, \quad  D_t\Psi =\frac12 V \Psi, \quad  \Psi=\begin{bmatrix} \Psi_1\\\Psi_2 \end{bmatrix}, 
\end{equation}
is well defined and we can ask meaningfully the question of 
compatibility, remembering that by Definition \ref{def:PsiDist} $D_t$ and $D_x$ indeed commute in action on $\Psi$.  We have 
\begin{theorem}\label{thm:invreg}
Let $m$ be the discrete measure associated to $u$ defined by \eqref{eq:peakonansatz}.
Given an invariant regularization in the sense of Definition \ref{def:invariantreg} the 
distributional Lax pair \eqref{eq:DLax-pair} is compatible, i.e. $D_tD_x\Psi=D_xD_t\Psi$,  if and only if 
the following conditions hold: 
\begin{subequations}
\begin{align} 
\beta =1, \qquad \qquad   \alpha^2&=\frac14, &
Q(x_k)&=\avg{Q}(x_k), &\label{eq:abQ}\\
\dot m_k&=0, &
\dot x_k&=Q(x_k)&. \label{eq:mx}
\end{align}
\end{subequations}

\end{theorem}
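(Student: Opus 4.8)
The plan is to reduce the distributional compatibility $D_tD_x\Psi=D_xD_t\Psi$ to a finite list of coefficient-matching conditions. Since the scalar factors $\frac12$ cancel, compatibility is equivalent to the single identity $D_t(U\Psi)=D_x(V\Psi)$. My first move is to form the products $U\Psi$ and $V\Psi$ as genuine elements of $\R^2\otimes\spaceDprime$ using Definition \ref{def:invariantreg}: away from $\{x_k\}$ they are ordinary piecewise-smooth vectors, while the $m$-terms collapse to $2\lambda\sum_k m_k(\cdots)\delta_{x_k}$ with the regularized numerical coefficients $\Psi(x_k)$ and $Q(x_k)$. Having done this, I would differentiate with the elementary rules for piecewise-smooth functions carrying moving jumps, namely $D_xf=\{f_x\}+\sum_k\jump{f}(x_k)\delta_{x_k}$ and $D_tf=\{f_t\}-\sum_k\jump{f}(x_k)\dot x_k\delta_{x_k}$, together with $D_x(c_k\delta_{x_k})=c_k\delta'_{x_k}$ and $D_t(c_k\delta_{x_k})=\dot c_k\delta_{x_k}-c_k\dot x_k\delta'_{x_k}$. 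Both sides of $D_t(U\Psi)=D_x(V\Psi)$ then become a smooth part plus a $\delta_{x_k}$ part plus a $\delta'_{x_k}$ part, and the identity holds if and only if these three match termwise.

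Before matching I would record the relations forced by the spatial equation $D_x\Psi=\frac12U\Psi$ alone: its $\delta_{x_k}$ part gives the jump conditions $\jump{\Psi_1}(x_k)=\lambda m_k\Psi_2(x_k)$ and $\jump{\Psi_2}(x_k)=-\lambda m_k\Psi_1(x_k)$, where $\Psi_i(x_k)=\alpha\jump{\Psi_i}+\beta\avg{\Psi_i}$, while its smooth part gives the free system $\Psi_{1,x}=-\tfrac12\Psi_1$, $\Psi_{2,x}=\tfrac12\Psi_2$ off the support of $m$. The smooth part of the time equation supplies $\{\Psi_t\}=\frac12\{V\Psi\}$, which I will need in order to differentiate the regularized values $\Psi_i(x_k)$ in $t$. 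I would also recall the peakon jump data: $u$ is continuous, $\jump{u_x}(x_k)=-2m_k$, and $Q=u^2-u_x^2$ is piecewise constant.

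The $\delta'_{x_k}$ matching is immediate and clean, since the only contributions are the $\delta'$ produced by the moving delta in $D_t(U\Psi)$ and the $\delta'$ produced by $D_x$ hitting the singular part of $V\Psi$; balancing them yields $\dot x_k=Q(x_k)$, the second relation in \eqref{eq:mx}. The real work is the $\delta_{x_k}$ matching. Here I would substitute the jump conditions, the peakon values of $\jump{u_x}$ and $\jump{Q}$, the already-established $\dot x_k=Q(x_k)$, and---crucially---the material derivative $\tfrac{d}{dt}\Psi_i(x_k)$ of the regularized value, computed from $\{\Psi_t\}=\frac12\{V\Psi\}$ and the motion of $x_k$. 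The resulting vector identity must hold with $\lambda$ a free spectral parameter and with $\Psi_1(x_k),\Psi_2(x_k)$ regarded as independent, so matching the separate powers of $\lambda$ and the two components decouples it into the remaining conditions: $\beta=1$ and $\alpha^2=\tfrac14$ (equivalently, that $\Psi(x_k)$ be a one-sided limit $\Psi(x_k^{\pm})$, since $\Psi^{\pm}=\avg{\Psi}\pm\tfrac12\jump{\Psi}$), the assignment $Q(x_k)=\avg{Q}(x_k)$, and the isospectrality $\dot m_k=0$. Finally I would check that the smooth-part identity is just the zero-curvature condition in the region $m=0$, where it holds automatically and imposes no further constraint.

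I expect the $\delta_{x_k}$ computation to be the main obstacle, for two reasons. First, $\{V\Psi\}$ mixes a genuinely discontinuous smooth part (through $Q$ and $u\pm u_x$) with the singular $mQ$ part, so its $\delta_{x_k}$ coefficient requires the product-jump rule $\jump{fg}=\jump{f}\avg{g}+\avg{f}\jump{g}$ applied to several factors, each carrying its own power of $\lambda$. Second, and more delicately, one must differentiate the regularized value $\Psi_i(x_k)=\alpha\jump{\Psi_i}+\beta\avg{\Psi_i}$ along the moving point $x_k(t)$, which entangles the regularization parameters $\alpha,\beta$ with the time flow; isolating exactly the combination that forces $\beta=1$ and $\alpha^2=\tfrac14$ is the crux, and it is precisely $Q(x_k)=\avg{Q}(x_k)$ together with $\dot x_k=Q(x_k)$ that reproduces \eqref{eq:defprod} and the ODE system \eqref{eq:mx}. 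Because each coefficient-matching step is an equivalence, the same computation delivers both directions of the ``if and only if,'' so no separate converse is needed.
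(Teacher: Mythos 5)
A preliminary remark: the paper itself does not prove Theorem \ref{thm:invreg}; it explicitly defers the proof to the longer companion paper \cite{chang-szmigielski-m1CHlong}, so there is no in-paper argument to compare against and your proposal must be judged on its own terms. Your framework is the natural (and, as far as one can tell, the intended) one: reduce compatibility to $D_t(U\Psi)=D_x(V\Psi)$, realize the products as distributions via Definition \ref{def:invariantreg}, expand both sides with the jump rules for piecewise-smooth functions carrying moving discontinuities, and match the smooth, $\delta_{x_k}$ and $\delta'_{x_k}$ coefficients separately. The $\delta'_{x_k}$ balance giving $\dot x_k=Q(x_k)$ and the observation that $\beta=1$, $\alpha^2=\frac14$ is equivalent to $\Psi(x_k)$ being one of the one-sided limits $\Psi(x_k^{\pm})$ are both correct.

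The genuine gap is that the one computation in which the theorem actually lives --- the $\delta_{x_k}$ matching --- is announced but not performed. Every condition in \eqref{eq:abQ} together with $\dot m_k=0$ is supposed to drop out of that step, and you yourself label it ``the crux'' and ``the main obstacle''; predicting its outcome is not the same as deriving it. Concretely, you must (i) write $\frac{d}{dt}\Psi_i(x_k^{\pm})=\Psi_{i,x}(x_k^{\pm})\dot x_k+\Psi_{i,t}(x_k^{\pm})$, substituting $\Psi_{i,x}^{\pm}$ from the free spatial system and $\Psi_{i,t}^{\pm}$ from the smooth part of the $t$-equation, and then differentiate the combination $\alpha\jump{\Psi_i}+\beta\avg{\Psi_i}$; (ii) expand $\jump{V\Psi}(x_k)$ with the product-jump rule $\jump{fg}=\jump{f}\avg{g}+\avg{f}\jump{g}$ using $\jump{u_x}(x_k)=-2m_k$ and the jump of the piecewise-constant $Q$; and (iii) show that the resulting identity, holding for all $\lambda$ and for the actual values $\Psi_i(x_k^{\pm})$, forces exactly $\beta=1$, $\alpha^2=\frac14$, $Q(x_k)=\avg{Q}(x_k)$, $\dot m_k=0$ and nothing weaker. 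Step (iii) also carries the burden of the ``only if'' direction: separating powers of $\lambda$ is legitimate, but treating $\Psi_1(x_k^{\pm}),\Psi_2(x_k^{\pm})$ as independent quantities requires justification (e.g.\ nonvanishing of solutions of the first-order spatial system, or varying $\lambda$ over an infinite set), which you assert rather than supply. Until (i)--(iii) are executed the specific constants in \eqref{eq:abQ} remain conjectural, so what you have is a correct and well-organized plan rather than a proof.
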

We note that \eqref{eq:mx} is precisely \eqref{eq:xmODE} whereas 
the choice of $\beta=1$, $\alpha=\frac12$ and $\alpha=-\frac12$ amounts to 
the choice of right limits, respectively left limits,  in the definition of 
an invariant regularization \ref{def:invariantreg}.  

For a proof of Theorem \ref{thm:invreg} as well as a discussion of the origin of the concept of an invariant regularization we refer the reader to our longer paper 
\cite{chang-szmigielski-m1CHlong} in which we give a complete solution to the \textbf{integrable} peakon problem \eqref{mCH_ode}; the solution is obtained by
analyzing associated boundary value problems 
for the distributional Lax pairs \eqref{eq:DLax-pair} with $\beta=1, \alpha=\pm \frac12$.

\section*{Acknowledgements}
The first author was supported in part by the Natural Sciences and Engineering Research Council of Canada (NSERC), the Department of Mathematics and
Statistics of the University of Saskatchewan, the Pacific Institute of the Mathematical Sciences (PIMS) through the PIMS postdoctoral fellowship, and by LSEC, Institute of Computational Mathematics and Scientific Engineering Computing, AMSS, Chinese Academy of Sciences. J.S. was supported in part by NSERC \#163953.

\bibliographystyle{abbrv}
 \def\cydot{\leavevmode\raise.4ex\hbox{.}}
  \def\cydot{\leavevmode\raise.4ex\hbox{.}} \def\cprime{$'$}

  \end{document}